\documentclass[journal,10pt]{IEEEtran}
\usepackage{graphicx}
\usepackage{epstopdf}
\usepackage{multirow}
\usepackage{amsmath}
\usepackage{mathtools}
\usepackage{authblk}
\usepackage{cite}
\usepackage{cleveref}
\usepackage{latexsym}
\usepackage[justification=centering]{caption}
\usepackage{amsthm}
\usepackage{balance}
\usepackage{relsize}
\usepackage[nodisplayskipstretch]{setspace}
\usepackage{color}
\usepackage{amssymb}
\usepackage{eucal}
\usepackage{url}
\usepackage{booktabs,subcaption,amsfonts,dcolumn}

\usepackage[nodisplayskipstretch]{setspace}
\usepackage{color}
 \usepackage{soul}
\newcolumntype{d}[1]{D..{#1}}
\usepackage{lipsum}

\newtheorem{lemma}{Lemma}
\usepackage[nodisplayskipstretch]{setspace}
\usepackage{color}
\definecolor{Mypink}{RGB}{255,0,255}
\definecolor{Myorange}{RGB}{255,102,0}
\definecolor{Mygreen}{RGB}{0,153,0}
\definecolor{Myblue}{RGB}{0,0,255}

\DeclareMathAlphabet\mathbfcal{OMS}{cmsy}{b}{n}
\linespread{0.9}
\begin{document}

\title{Performance of Reconfigurable Intelligent Surfaces in the Presence of Generalized Gaussian Noise}

\renewcommand\Authfont{\fontsize{12}{14.4}\selectfont}
\renewcommand\Affilfont{\fontsize{9}{10.8}\itshape}

\author{
Lina~Mohjazi,~\IEEEmembership{Senior~Member,~IEEE,}
Lina~Bariah,~\IEEEmembership{Senior~Member,~IEEE,}
        Sami~Muhaidat,~\IEEEmembership{Senior~Member,~IEEE,}
        and Muhammad Ali Imran,~\IEEEmembership{Senior Member,~IEEE.}
				\thanks{L. Mohjazi and M. A. Imran are with the James Watt School of Engineering, University of Glasgow, Glasgow, G12 8QQ, UK (e-mail: l.mohjazi@ieee.org, $\text{Muhammad.Imran}\text{@glasgow.ac.uk}$).} 
				\thanks{L. Bariah is with the KU Center for Cyber-Physical Systems, Department of Electrical Engineering and Computer Science, Khalifa University, Abu Dhabi 127788, UAE, and with the James Watt School of Engineering, University of Glasgow, Glasgow, G12 8QQ, UK (e-mail: lina.bariah@ieee.org).}
				\thanks{S. Muhaidat is with the KU Center for Cyber-Physical Systems, Department of Electrical Engineering and Computer Science, Khalifa University, Abu Dhabi 127788, UAE, and with the Department of Systems and Computer Engineering, Carleton University, Ottawa, ON K1S 5B6, Canada, (e-mail: muhaidat@ieee.org).}
		\vspace{-0.7cm}		
		}

\maketitle

\begin{abstract}
\boldmath
In this letter, we investigate the performance of reconfigurable intelligent surface (RIS)-assisted communications, under the assumption of generalized Gaussian noise (GGN), over Rayleigh fading
channels. Specifically, we consider an RIS, equipped with $N$ reflecting elements, and derive a novel closed-form expression for the symbol error rate (SER) of arbitrary modulation schemes. The usefulness of the derived new expression is that it can be used to capture the SER performance in the presence of special additive noise distributions such as Gamma, Laplacian, and Gaussian noise. These special cases are also considered and their associated asymptotic SER expressions are derived, and then employed to quantify the achievable diversity order of the system. The theoretical framework is corroborated by numerical results, which reveal that the shaping parameter of the GGN ($\alpha$) has a negligible effect on the diversity order of RIS-assisted systems, particularly for large $\alpha$ values. Accordingly, the maximum achievable diversity order is determined by $N$.

\end{abstract} 

\begin{keywords}
Additive generalized Gaussian noise, error rate analysis, reconfigurable intelligent surfaces, Meijer's G-function.
\end{keywords}

\vspace{-0.2cm}

\IEEEpeerreviewmaketitle
\section{Introduction}

\IEEEPARstart{S}{MART} radio environments, empowered by reconfigurable intelligent surfaces (RISs), are envisioned to revolutionize the design of wireless systems, since they offer the ability to control the propagation environment of electromagnetic waves through establishing desired channel responses. Unlike conventional transmission techniques, such as phased arrays, multi-antenna transmitters, and relays involving active components, RISs are composed of a large number of passive reflecting elements (REs), each is supported by the lowest number of small-sized, low-power, and inexpensive components \cite{DiRenzo2020}. With the aid of at least one smart controller, the amplitude and/or phase-shift of incident signals are manipulated at each RE in real time, allowing for a transformative control (i.e., scattering, reflection, absorption, and refraction) of the radio waves characteristics. 
%
%
%
%
%
%
\par To reap the full potentials brought by RIS and cater for their unique challenges, exploring the performance limits of RIS-assisted wireless communications analytically has become an active research topic \cite{bariah2020,Boulogeorgos,Han,Atapattu}. Specifically, in \cite{bariah2020}, the error probability performance is investigated for RIS-assisted non-orthogonal multiple access (NOMA) and backscatter communication systems, respectively. Furthermore, \cite{Boulogeorgos} investigated the performance of RIS-assisted wireless systems, based on outage probability, symbol error rate, and ergodic capacity. Similarly, the ergodic capacity is studied in \cite{Han} for multiple-antenna RIS-assisted systems. For two–way communications, the outage probability and spectral efficiency of the system are explored in \cite{Atapattu}. Apart from the previous works considering ideal operational conditions, a few research studies examined the performance of RIS-assisted systems in the presence of several practical considerations, such as RF transceiver hardware impairments \cite{Hemanth} and phase noise \cite{qian2020beamforming}.  
\par Although the results from \cite{bariah2020,Boulogeorgos,Han,Atapattu,Hemanth,qian2020beamforming} are insightful, the performance of the considered systems is studied under the additive white Gaussian noise (AWGN) assumption. While this assumption accurately models the distribution of thermal noise, which is due to the random motion of free electrons at the receiver, it ignores the impulsive nature of electromagnetic interference, atmospheric noise and/or man-made noise, which might be dominant in numerous applications and lead to severe performance degradation. In particular, it was shown that communication systems designed under the AWGN assumption typically suffer from severe performance degradation when exposed to impulsive noise \cite{Mohjazi2018}. In this regard, automotive ignition noise and power transmission lines are common sources of impulsive noise, which are experienced, particularly in urban areas \cite{soury2012}. Also, the combined statistics of interference and noise in ultra wide-band communication systems is demonstrated to follow the generalized Gaussian distribution (GGD). It is worth mentioning that the impulsive noise component in the simplified class-A noise model, which is modeled by Laplacian distribution, is a special case of the GGD. Note that the additive white generalized Gaussian noise (AWGGN), given in terms of the generalized Q–function, can model different types of noise, e.g., AWGN, Laplacian, Gamma, uniform, and impulsive noise \cite{9097184}.

\par We emphasize that the impact of AWGGN on the performance of RIS-assisted communications is not understood yet. This underlines the need for carrying a thorough investigation, which is compelling to provide pragmatic information for the system designer and hence, facilitates a successful realization of RIS-assisted systems in specific applications. 
\par Motivated by the above consideration, in this letter, we analyze the performance of an RIS-assisted wireless communication system over Rayleigh fading channels, in the presence of AWGGN. We first derive a novel closed-form expression for the symbol error rate (SER), which is subsequently used to quantify the SER and the achievable diversity order of the systems when three types of noise are considered, namely, Gamma, Laplacian, and Gaussian noise. To the best of the authors’ knowledge, such a performance study has not been addressed in the open literature.
%
%
\section{System Model}\label{model}
In this letter, we consider that a single-antenna source node, $S$, communicates with a single-antenna destination node, $D$, through an RIS, which is deployed to extend the operational range of the system. The end-to-end (E2E) signal-to-noise ratio (SNR) between $S$ and $D$ defines the quality of the overall communication, including the SER. Therefore, to quantify the impact of AWGGN on RIS-assisted networks, in this section, we analytically present the distribution of the instantaneous SNR, which will be exploited next to develop the statistical characterization of the underlying SER and the achievable diversity order of the system in the presence of AWGGN.
\par We assume that the direct link between $S$ and $D$ is subject to strong attenuation, as a result of deep fading, or shadowing effects, caused by surrounding physical obstacles, or both, and therefore, this link does not exist. It is worth mentioning that such an assumption is followed in the literature of RIS-assisted communications \cite{bariah2020,Boulogeorgos}, where RIS is considered to be deployed in environments that experience high blockage.
\par In our setup, we consider an RIS that is equipped with $\textit{N}$ REs, each of which can be reconfigured by a communication oriented software through a controller. The signal transmitted from $S$, being either a base station or an access point, is reflected by the RIS towards $D$. The small-scale complex channel fading coefficients of the $S$ $\to$RIS and RIS$\to D$ links are denoted as $h_i$ and $g_{i}$, respectively, where $i\in\{1,2,...,N\}$ is the $i$-th element of RIS. It is assumed that the fading envelopes of the two wireless links are modeled as independent and identically distributed (i.i.d) Rayleigh fading channels with zero mean and unit variance, i.e., $\sigma^2=1$. The assumption of Rayleigh fading channels presents the scenario in which a line-of-sight (LOS) propagation cannot be established due to random RIS deployments, e.g., if the RISs are deployed on spatial blockages. In such a case, the system designer has no control over optimizing the locations of the RISs \cite{qian2020beamforming}. \footnote{It is worth mentioning that this assumption is followed in previously published works including \cite{Boulogeorgos,Atapattu} and the references therein. This assumption stems from the fact that even if LOS links between S$\to$RIS and RIS$\to$$D$ are not blocked, there still exist a high number of scatters.}
 \par If the source node transmits the symbol $x$ with power $P_s$, then the baseband equivalent received signal at $D$ through RIS is expressed as \vspace{-0.2cm}
 \begin{equation}\label{Pr}
y=\sum_{i=1}^N h_i g_{i}e^{j\theta_i}x+n, 
 \end{equation} 
 where the noise term, $n\in \mathbb{R}$, represents the AWGGN with zero mean and variance $N_0/2$ \cite{soury2012}. Specifically, the probability density function (PDF) of $n$ is given by \cite{soury2012}
\begin{equation}
f(n)=\frac{\alpha \Lambda}{2\Gamma (1/\alpha)}\text{exp}\left(-\Lambda^\alpha|n|^\alpha\right),
\end{equation} 
where $\Gamma(.)$ represents the complete Gamma function \cite{IntTable}, $\alpha \in \mathbb{R}^+$ denotes the shaping parameter, and $\Lambda=2\Lambda_0/N_0$ is the noise power normalization coefficient with $\Lambda_{0} = \sqrt{\Gamma(3/\alpha)/\Gamma(1/\alpha)}$. It is worth mentioning that when the value of $\alpha$ increases, the GGD tail becomes tighter, yielding to lower noise levels. On the other hand, as $\alpha$ decreases, the impact of AWGGN on the received signal becomes higher.

\par Moreover, in \eqref{Pr}, $\theta_i$ specifies the adjustable phase induced by the $i$-th RE of the RIS \cite{Boulogeorgos}. We further consider that optimal phase shifting, i.e., $\theta_i=-(\phi_{h_i}+\phi_{g_{i}})$, is provided by the RIS, since it is assumed that the RIS has perfect knowledge of the channel phases of $h_i$ and $g_{i}$, denoted respectively as $\phi_{h_i}$ and $\phi_{g_{i}}$ \cite{bariah2020,qian2020beamforming,Boulogeorgos}. In this letter, we assume that the amplitude of the reflection coefficient of each RE is equal to 1 for all $i\in N$ \cite{Boulogeorgos}. Capitalizing on this and based on \eqref{Pr}, the instantaneous received SNR at $D$ through the RIS is expressed as
\begin{equation}\label{Pr-D}
 \gamma= \left|\sum_{i=1}^N |h_i| |g_{i}|\right|^2 \bar{\gamma},
\end{equation} where $\bar{\gamma}=2P_s/\left(d_1^{\varrho}d_2^{\varrho} N_0\right )$ specifies the average transmit SNR, with $d_1$, $d_2$, $\varrho$ denoting the $S$-RIS distance, RIS-$D$ distance, and path-loss exponent, respectively. The PDF of the E2E SNR is delivered in the following Lemma.
\newcounter{tempequationcounter}
 \begin{figure*}[ht]
 \setcounter{equation}{4}
\begin{equation}\label{m3a}
\mu_{3}=\left\{ \begin{array}{l}
N\pi\left(9/2+6(N-1)+(N-1)(N-2)\pi^{2}/8\right),N\geq3\\
21\pi,N=2\\
9\pi/2,N=1.
\end{array}\right. 
\end{equation}
\end{figure*}
\begin{figure*}[ht] 
\begin{align}\label{mu4}
\mu_{4}=\left\{ \begin{array}{l}
\left(64N+48N(N-1)+9N(N-1)\pi^{2}+6N(N-1)(N-2)\pi^{2}+N(N-1)(N-2)(N-3)\pi^{4}/16\right),N\geq4\\
\left(480+90\pi^{2}\right),N=3\\
\left(224+18\pi^{2}\right),N=2\\
64,N=1.
\end{array}\right.
\end{align}
\hrulefill
\end{figure*}
 \setcounter{equation}{3}
\begin{lemma}
\label{lemma:SNR_PDF}
The PDF of the SNR, $\gamma$, can be tightly approximated by \cite{mohjazi2020}
\begin{equation}
\label{eq:PDF}
f_\gamma\left ( \gamma \right )= \frac{a_1 a_2}{2 \gamma}\textup{G}^{2,0}_{1,2}\left [ \frac{1}{a_2}\sqrt{\frac{\gamma}{\bar{\gamma}}} \Bigg| \begin{matrix}
-; a_3 +1\\ 
a_5+1, a_4+1;-
\end{matrix} \right ], \gamma \geq 0
\end{equation}
where $\textup{G}^{m,n}_{p,q}\left [ x \Bigg| \begin{matrix}-; -\\ -;-\end{matrix} \right ]$ represents the Meijer's $\textup{G}$-Function \cite{IntTable}. Moreover, $a_{1}=\Gamma(a_{3}+1)/[a_{2}\Gamma(a_{4}+1)\Gamma(a_{5}+1)]$, $a_{2}=a_{3}/2\left(\varphi_{4}-2\varphi_{3}+\varphi_{2}\right)+2\varphi_{4}-3\varphi_{3}+\varphi_{2}$, $a_{3}=[4\varphi_{4}-9\varphi_{3}+6\varphi_{2}-\mu_{1}]/[-\varphi_{4}+3\varphi_{3}-3\varphi_{2}+\mu_{1}]$, $a_{4}=[a_{6}+a_{7}]/2$, $a_{5}=[a_{6}-a_{7}]/2$, $a_{6}=a_{2}^{-1}[a_{3}\left(\varphi_{2}-\mu_{1}\right)+2\varphi_{2}-\mu_{1}]-3$, $a_{7}=\sqrt{\left(a_{2}^{-1}[a_{3}\left(\varphi_{2}-\mu_{1}\right)+2\varphi_{2}-\mu_{1}]-1\right)^{2}-4a_{2}^{-1}\mu_{1}(a_{3}+1)}$,
and $\varphi_{i}=\mu_{i}/\mu_{i-1},i \geq 1$, where $\mu_{i}$ is the $i$th moment of $\sqrt{\gamma}$, $\mu_{0}=1$. For accurately approximating the PDF in \eqref{eq:PDF}, the first four moments are derived in \cite{bariah2020} and are presented here for the clarity and completeness of the work, as follows, $\mu_{1} =N\pi/2$, and $\mu_{2}  =\left(4+(N-1)\pi^{2}/4\right)N$.
$\mu_3$ and $\mu_{4}$ are given in \eqref{m3a} and \eqref{mu4}, respectively, at the top of next page.
\end{lemma}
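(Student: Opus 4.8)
The plan is to treat $R \coloneqq \sqrt{\gamma/\bar\gamma} = \sum_{i=1}^N |h_i|\,|g_i|$ as a sum of $N$ i.i.d.\ products of two Rayleigh variables (a ``double-Rayleigh'' sum), noting that the stated moments $\mu_1 = N\pi/2$, etc., carry no factor of $\bar\gamma$ and hence are the moments of $R$ rather than of $\sqrt{\gamma}$ literally. Since the exact PDF of such a sum has no tractable closed form, I would not attempt it directly; instead I would fit a flexible parametric family whose density is a single $\mathrm{G}^{2,0}_{1,2}$ Meijer G-function---the law of a product of two Gamma variates---and fix its parameters by matching the first four moments of $R$. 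The PDF of $\gamma$ then follows from the change of variables $\gamma = \bar\gamma R^2$, which supplies the Jacobian factor $1/(2\gamma)$ and the argument $a_2^{-1}\sqrt{\gamma/\bar\gamma}$ appearing in \eqref{eq:PDF}.

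First I would compute $\mu_k = \mathbb{E}[R^k]$. Expanding $\big(\sum_i |h_i|\,|g_i|\big)^k$ by the multinomial theorem and using independence reduces each term to a product of marginal moments $\mathbb{E}[(|h_i||g_i|)^m] = \mathbb{E}[|h_i|^m]\,\mathbb{E}[|g_i|^m]$. For unit-scale Rayleigh envelopes $\mathbb{E}[|h_i|^m] = 2^{m/2}\Gamma(1+m/2)$, so $\mathbb{E}[(|h_i||g_i|)^m] = 2^{m}\Gamma(1+m/2)^2$; in particular the single-index moments are $\pi/2,\,4,\,9\pi/2,\,64$ for $m=1,2,3,4$, which already explains the $N=1$ rows of \eqref{m3a} and \eqref{mu4}. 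Grouping the multinomial terms according to the integer partition of $k$ (which RE indices coincide) and counting how many distinct indices each pattern uses produces the closed forms for $\mu_1,\mu_2$ and the piecewise expressions for $\mu_3,\mu_4$; the case splits ($N\ge 4$, $N=3,2,1$) simply record that partitions into more parts than $N$ contribute nothing.

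To pin down $a_1,\dots,a_7$ I would use the Mellin transform of the G-function. Writing $f_R(r) = (a_1 a_2/r)\,\mathrm{G}^{2,0}_{1,2}[r/a_2\,|\,\cdots]$, the standard Mellin formula gives $\mathbb{E}[R^{s}] = a_1 a_2^{\,s+1}\,\Gamma(a_5+1+s)\Gamma(a_4+1+s)/\Gamma(a_3+1+s)$. Normalization ($\mu_0=1$) then forces $a_1 = \Gamma(a_3+1)/[a_2\Gamma(a_4+1)\Gamma(a_5+1)]$, exactly as stated, and the ratios of consecutive moments collapse, via $\Gamma(x+1)=x\Gamma(x)$, to $\varphi_k = \mu_k/\mu_{k-1} = a_2(a_4+k)(a_5+k)/(a_3+k)$. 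Imposing this for $k=1,2,3,4$ and clearing the denominator $a_3+k$ makes the four conditions linear in $a_3$ and in the combinations $a_2$, $a_2(a_4+a_5)$, $a_2 a_4 a_5$, so they solve by elimination; the second-difference combinations $\varphi_4-2\varphi_3+\varphi_2$ (and similar) appearing in $a_2$ and $a_3$ are precisely the artifacts of eliminating the quadratic-in-$k$ numerator, while $a_4,a_5$ emerge as the two roots $(a_6\pm a_7)/2$ of the resulting quadratic, matching $a_6=a_4+a_5$ and $a_7=\sqrt{(a_4+a_5)^2-4a_4a_5}$.

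The main obstacle is twofold. The more laborious part is $\mu_4$: enumerating the partitions $(4),(3,1),(2,2),(2,1,1),(1,1,1,1)$ with the correct multinomial weights and index-selection counts, then handling small $N$ separately, is where bookkeeping errors are easiest. The conceptually delicate part is that the moment-matched $\mathrm{G}^{2,0}_{1,2}$ law is a \emph{tight approximation} rather than an exact identity for the double-Rayleigh sum; I would justify it only heuristically---agreement of four moments of a density supported on $[0,\infty)$---and defer the quantitative assessment to the numerical comparison, which is the standard posture for this moment-based fitting technique.
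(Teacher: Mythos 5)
Your proposal is correct and takes essentially the same route as the paper, whose own ``proof'' is merely a citation of \cite{mohjazi2020}: it is the moments method, i.e., fitting a $\textup{G}^{2,0}_{1,2}$ density to $\sum_{i}|h_i||g_i|$ by matching the first four moments, and your Mellin-transform elimination reproduces the lemma's formulas for $a_1,\dots,a_7$ exactly (the third/second/first finite differences of $\varphi_k(a_3+k)=a_2\left(k^2+(a_4+a_5)k+a_4a_5\right)$ give precisely the stated $a_3$, $a_2$, $a_6$, $a_7$), while your partition bookkeeping recovers \eqref{m3a} and \eqref{mu4}, including the small-$N$ cases. One immaterial aside: the $\textup{G}^{2,0}_{1,2}$ family with Mellin transform $\Gamma(a_5+1+s)\Gamma(a_4+1+s)/\Gamma(a_3+1+s)$ is the law of a Gamma$\,\times\,$Beta product rather than of two Gamma variates (a product of two Gammas has a Bessel-$K$, i.e., $\textup{G}^{2,0}_{0,2}$, density), but no step of your derivation relies on that identification.
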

\begin{proof}
Proof of Lemma \ref{lemma:SNR_PDF} is provided in \cite{mohjazi2020}.
\end{proof}
It is worth mentioning that, although recent works available in the relevant literature, such as those in \cite{Lyu2020, Lyu2021}, adopt the moments method to approximate the PDF of the E2E SNR, our main contribution compared to those studies is the derivation of a novel tractable closed-form expression for the SER over AWGGN. This will enable us to further offer novel simple-to-evaluate closed-form expressions for the SER considering three different cases of AWGGN types, together with their respective high-SNR asymptotic SER closed-forms. We will also derive novel analytical representations of the underlying diversity order of the three cases under study. The afore-mentioned contributions will be detailed in the next section.
\section{Performance Analysis} 
\label{sec:SER}
\par In this section, we quantify the performance of RIS-assisted communications under different noise models by deriving accurate expressions for the SER and the associated achievable diversity orders. Considering the presence of AWGGN, the conditional SER of an arbitrary modulation scheme can be written as \cite{soury2013}
\setcounter{equation}{6}
\begin{equation}
\label{eq:SER1}
P_{e} \mid \gamma = A \textup{Q}_{\alpha}(\sqrt{B \gamma}),
\end{equation} 
where $A$ and $B$ are modulation dependent parameters, as depicted in Table \ref{Table:AB} on the next page. Also, 
\begin{equation}
\textup{Q}_{\alpha}(x) = \frac{\alpha \Lambda_{0}}{2\Gamma(1/\alpha)}\int_{x}^{\infty}\textup{exp}\left ( - \Lambda_{0}^{\alpha}t^{\alpha}\right )dt
\end{equation}
denotes the generalized Gaussian $\textup{Q}$-function. In order to obtain the closed-form expression of the unconditional SER, we average the conditional SER over the PDF of $\gamma$, given in \eqref{eq:PDF}. 
Therefore, using Lemma \ref{lemma:SNR_PDF}, the unconditional SER can be evaluated as
\begin{equation}
\label{eq:SER2}
\begin{split}
P_{e} &= A\int_{0}^{\infty} \textup{Q}_{\alpha}(\sqrt{B \gamma})f_\gamma\left ( \gamma \right )d\gamma\\
& = \frac{Aa_1 a_2}{2} \int_{0}^{\infty} \frac{1}{\gamma}\\
& \;\;\;\; \times \textup{Q}_{\alpha}(\sqrt{B \gamma})\textup{G}^{2,0}_{1,2}\left [ \frac{1}{a_2}\sqrt{\frac{\gamma}{\bar{\gamma}}} \Bigg| \begin{matrix}
-; a_3 +1\\ 
a_5+1, a_4+1;-
\end{matrix} \right ]d\gamma.
\end{split}
\end{equation} 
To evaluate the integral in \eqref{eq:SER2}, we utilize \cite[Eq. A.3]{soury2012} and \cite[Eq. 8.4.16.2]{int3} in order to represent the generalized Gaussian $\textup{Q}$-Function in terms of Meijer's $\textup{G}$-function, as the following
\begin{equation}
    \label{eq:marcumQ}
    \textup{Q}_{\alpha}(\sqrt{B \gamma}) = \frac{\Lambda_{0}^{\frac{2}{\alpha}-1}}{2\Gamma\left ( \frac{1}{\alpha} \right )}\textup{G}^{2,0}_{1,2}\left [ \Lambda_{0}^{\alpha}\left (\sqrt{B \gamma}  \right )^{\alpha} \Bigg| \begin{matrix}
-; 1\\ 
0, \frac{1}{\alpha};-
\end{matrix} \right ].
\end{equation}
Hence, the integral in \eqref{eq:SER2} can be rewritten as
\begin{equation}
\label{eq:SER3}
\begin{split}
P_{e}  = \frac{Aa_1 a_2 \Lambda_{0}^{\frac{2}{\alpha}-1}}{4\Gamma\left ( \frac{1}{\alpha} \right )}&\int_{0}^{\infty} \frac{1}{\gamma}\textup{G}^{2,0}_{1,2}\left [ \Lambda_{0}^{\alpha}\left (\sqrt{B \gamma}  \right )^{\alpha} \Bigg| \begin{matrix}
-; 1\\ 
0, \frac{1}{\alpha};-
\end{matrix} \right ]\\
& \times \textup{G}^{2,0}_{1,2}\left [ \frac{1}{a_2}\sqrt{\frac{\gamma}{\bar{\gamma}}} \Bigg| \begin{matrix}
-; a_3 +1\\ 
a_5+1, a_4+1;-
\end{matrix} \right ]d\gamma.
\end{split}
\end{equation} 
Exploiting the integration by substitution method with $x=\sqrt{\gamma}$, the integral in \eqref{eq:SER3} can be expressed as
\begin{equation}
\label{eq:SER4}
\begin{split}
P_{e}  = \frac{Aa_1 a_2 \Lambda_{0}^{\frac{2}{\alpha}-1}}{2\Gamma\left ( \frac{1}{\alpha} \right )}&\int_{0}^{\infty} \frac{1}{x}\textup{G}^{2,0}_{1,2}\left [ \Lambda_{0}^{\alpha}B^{\alpha/2}x^\alpha \Bigg| \begin{matrix}
-; 1\\ 
0, \frac{1}{\alpha};-
\end{matrix} \right ]\\
& \times \textup{G}^{2,0}_{1,2}\left [ \frac{1}{a_2}\frac{x}{\sqrt{\bar{\gamma}}} \Bigg| \begin{matrix}
-; a_3 +1\\ 
a_5+1, a_4+1;-
\end{matrix} \right ]dx.
\end{split}
\end{equation} 
Next, the closed-form expression of the SER can be obtained from \eqref{eq:SER4} by employing \cite[Eq. 2.24.1.1]{int3}. Therefore, the closed-form SER can be evaluated to \eqref{eq:SER5}, presented at the top of the next page, where $k$ and $l$ are selected to be integer values and to satisfy $l/k=\alpha$. It is important to note that this closed-form result \eqref{eq:SER5} is a novel generic expression for the SER of arbitrary signaling in RIS-assisted systems subject to AWGGN. It is recalled that most of the common noise models are considered as special cases of the AWGGN. To demonstrate the usefulness of the derived expression in \eqref{eq:SER5} and to obtain more insights into the system performance, as well as its achievable diversity order, we consider the following cases of well-known noise models:  i) Gamma, ii) Laplacian, and iii) Gaussian.
\begin{table}[h]
\centering
\caption{$A$ and $B$ values \cite{salahat2014}.}
\begin{tabular}{|c|c|c|}
\hline
Modulation & $A$ & $B$ \\ \hline\hline
 BPSK   & 1 & 1 \\ \hline
 QPSK, 4-QAM   & 2 & 2 \\ \hline
 $M$-PSK   & 2 & $2\textup{sin}^{2}(\pi/M)$ \\ \hline
Rectangular $M$-QAM   & $4(\sqrt{M}-1)/\sqrt{M}$ & $3/(\sqrt{M}-1)$ \\ \hline
\end{tabular}
\label{Table:AB}
\end{table}
\begin{figure*}[!t]
\begin{equation}
\label{eq:SER5}
\begin{split}
P_{e}  = \frac{Aa_1 a_2 \Lambda_{0}^{\frac{2}{\alpha}-1}k^{(\frac{1}{\alpha}-\frac{1}{2})}l^{(a_5+a_4-a_3+\frac{1}{2})}}{2\Gamma\left ( \frac{1}{\alpha} \right )\left(2 \pi \right)^{\frac{1}{2}(l-1)+\frac{1}{2}(k-1)}} \textup{G}^{2k,2l}_{k+2l,2k+l}\left [ \frac{l^l}{k^k}\Lambda_0^{\alpha k}B^{\frac{\alpha k}{2}}a_{2}^{l}\bar{\gamma}^{\frac{l}{2}} \Bigg| \begin{matrix}
\frac{-a_5}{l}\cdots \frac{-a_5+l-1}{l},\frac{-a_4}{l}\cdots \frac{-a_4+l-1}{l},\frac{1}{k}\cdots \frac{1+k-1}{k}\\ 
0 \cdots \frac{k-1}{k},\frac{1/\alpha}{k}\cdots \frac{1/\alpha+k-1}{k},\frac{-a_3}{l}\cdots\frac{-a_3+l-1}{l}
\end{matrix} \right ].
\end{split}
\end{equation} 
\hrulefill
\vspace*{1pt}
\end{figure*}
%
%
%
%
%
\par \textbf{Case 1: (Gamma noise)}
For the special case of Gamma noise, the shaping parameter $\alpha$ is set to 1/2, i.e., $l=1$ and $k=2$. Therefore, the SER in the presence of Gamma noise can be written as 
\begin{equation}
\label{eq:Gamma}
\begin{split}
P_{e,\textup{GM}}  = \frac{Aa_1 a_2 (5!)^\frac{3}{2}}{\sqrt{\pi} } \textup{G}^{4,2}_{4,5}\left [ \zeta \Bigg| \begin{matrix}
-a_5,-a_4; \frac{1}{2},1\\ 
0,\frac{1}{2},1,\frac{3}{2};-a_3
\end{matrix} \right ]
\end{split}
\end{equation}
where $\zeta = a_{2}\sqrt{B\bar{\gamma}(5!)}/4.$
Utilizing the accurate Meijer’s G-function approximation in \cite[Eq. 41]{ansari2015}, the SER expression in \eqref{eq:Gamma} can be asymptotically evaluated, when $\bar{\gamma} \rightarrow \infty$, as
\begin{equation}
\label{eq:Gamma1}
\begin{split}
\bar{P}_{e,\textup{GM}}  \approx & \frac{Aa_1 a_2 (5!)^\frac{3}{2}}{\sqrt{\pi} }\Bigg [ \tfrac{\Gamma(-a_5+a_4)\Gamma(1+a_5)\Gamma(\frac{5}{2}+a_5)}{\Gamma(-a_5+a_3)}\zeta^{-a_5-1}\\
& \;\;\;\;\;\;\;\;\;\;\;\;\;\;\;\;\; + \tfrac{\Gamma(-a_4+a_5)\Gamma(1+a_4)\Gamma(\frac{5}{2}+a_4)}{\Gamma(-a_4+a_3)}\zeta^{-a_4-1} \Bigg ].
\end{split}
\end{equation}
The asymptotic SER derived in \eqref{eq:Gamma1} can be employed to compute the achievable diversity order of the adopted system model under the assumption of Gamma noise. The achievable diversity order can be defined as the slop of the SER at high SNR values, which can be analytically evaluated as 
\begin{equation}
    \label{eq:Div_Gamma}
    D = -\textup{log}\bar{P}_e/\textup{log} \bar{\gamma}.
\end{equation}
Therefore, to obtain the achievable diversity order, we insert \eqref{eq:Gamma1} into \eqref{eq:Div_Gamma} and perform the log operation. Then by considering the dominant components that are dependent on the SNR, the achievable diversity order can be evaluated as
\begin{equation}
    \label{eq:Div_Gamma2}
    \begin{split}
    D_\textup{GM} & \approx -\frac{\textup{log}\left (\bar{\gamma}^{-\frac{1}{2}(a_5+1)}+\bar{\gamma}^{-\frac{1}{2}(a_4+1)}\right )}{\textup{log} \bar{\gamma}} \\
    & \approx \textup{min}\left [ \frac{1}{2}(a_5+1),\frac{1}{2}(a_4+1)\right ].
    \end{split}
\end{equation}
It can be noted from Lemma 1 that the PDF parameters, including $a_4$ and $a_5$ are functions of $N$. In this respect, it was shown in \cite[Fig. 2]{bariah2020} that $a_5 \leq a_4$, for all $N$ values. Hence, the achievable diversity order expression in \eqref{eq:Div_Gamma2} can be further simplified to $D_\textup{GM}  \approx (a_5+1)/2$.
%
%
%
\par \textbf{Case 2: (Laplacian noise)} In the Laplacian noise model, the noise shaping parameter equals to $\alpha=1$, i.e., $l=1$ and $k=1$. Therefore, for the Laplacian noise scenario, the SER can be obtained by substituting $\alpha$, $l$, and $k$ by their values in \eqref{eq:SER5}, yielding a simplified SER expression, as follows
\begin{equation}
\label{eq:Laplacian}
\begin{split}
P_{e,\textup{LP}}  = \frac{Aa_1 a_2}{\sqrt{2}} \textup{G}^{2,2}_{3,3}\left [ \lambda \Bigg| \begin{matrix}
-a_5,-a_4; 1\\ 
0,1;-a_3
\end{matrix} \right ],
\end{split}
\end{equation}
where $\lambda = a_2 \sqrt{2B\bar{\gamma}}$.
The achievable diversity order under the assumption of Laplacian noise can be evaluated by first deriving the asymptotic SER, which can be evaluated at high SNR values as follows 
\begin{equation}
\label{eq:Laplacian1}
\begin{split}
\bar{P}_{e,\textup{LP}}  \approx  \frac{Aa_1 a_2}{\sqrt{2}}& \Bigg [ \frac{\Gamma(-a_5+a_4)\Gamma(1+a_5)}{\Gamma(-a_5+a_3)}\lambda^{-a_5-1}\\
& + \frac{\Gamma(-a_4+a_5)\Gamma(1+a_4)}{\Gamma(-a_4+a_3)}\lambda^{-a_4-1} \Bigg ].
\end{split}
\end{equation}
Following the same procedure used to obtain the diversity order in Case 1, the achievable diversity order for the case of Laplacian noise can be written as $D_\textup{LP}  \approx (a_5+1)/2$.
\par \textbf{Case 3: (Gaussian noise)} To derive the SER in the presence of Gaussian noise, we assign the values 2, 2, and 1 to the variables $\alpha$, $l$, and $k$, respectively, in \eqref{eq:SER5}, yielding
\begin{equation}
\label{eq:Gauss}
\begin{split}
P_{e,\textup{GS}}  = &\tfrac{Aa_1 a_2}{2^{a_5 + a_4 - a_3 -1}\pi^{3/4}}  \textup{G}^{2,4}_{5,4}\left [ \eta \Bigg| \begin{matrix}
\frac{-a_5}{2},\frac{-a_5+1}{2},\frac{-a_4}{2},\frac{-a_4+1}{2}; 1\\ 
0,\frac{1}{2};\frac{-a_3}{2},\frac{-a_3+1}{2}
\end{matrix} \right ],
\end{split}
\end{equation}
where $\eta = 2a_{2}^{2}B\bar{\gamma}$.
Similar to \eqref{eq:Gamma1} and \eqref{eq:Laplacian1}, the asymptotic SER in the presence of Gaussian noise is given by
\begin{equation}
\label{eq:Gauss1}
\begin{split}
&\bar{P}_{e,\textup{GS}}  \approx  \frac{2^{a_5 + a_4 - a_3 -1}Aa_1 a_2}{\pi^{3/4}} \\
&\Bigg [ \frac{\Gamma(-\frac{1}{2})\Gamma(\frac{-a_5+a_4}{2})\Gamma(\frac{-a_5+a_4-1}{2})\Gamma(1+\frac{a_5}{2})\Gamma(\frac{a_5+3}{2})}{\Gamma(\frac{a_5+4}{2})\Gamma(\frac{-a_5+a_3}{2})\Gamma(\frac{-a_5+a_3-1}{2})}\eta^{-\frac{a_5}{2}-1}\\
& + \frac{\Gamma(\frac{1}{2})\Gamma(\frac{-a_5+a_4+1}{2})\Gamma(\frac{-a_5+a_4}{2})\Gamma(1+\frac{a_5}{2})\Gamma(\frac{a_5-1}{2})}{\Gamma(\frac{a_5+3}{2})\Gamma(\frac{-a_5+a_3}{2})\Gamma(\frac{-a_5+a_3+1}{2})}\eta^{-\frac{a_5}{2}-\frac{1}{2}} \\
& +\frac{\Gamma(-\frac{1}{2})\Gamma(\frac{-a_4+a_5}{2})\Gamma(\frac{-a_4+a_5-1}{2})\Gamma(1+\frac{a_4}{2})\Gamma(\frac{a_4+3}{2})}{\Gamma(\frac{a_4+4}{2})\Gamma(\frac{-a_4+a_3}{2})\Gamma(\frac{-a_4+a_3-1}{2})}\eta^{-\frac{a_4}{2}-1} \\
& + \frac{\Gamma(\frac{1}{2})\Gamma(\frac{-a_4+a_5+1}{2})\Gamma(\frac{-a_4+a_5}{2})\Gamma(1+\frac{a_4}{2})\Gamma(\frac{a_4-1}{2})}{\Gamma(\frac{a_4+3}{2})\Gamma(\frac{-a_4+a_3}{2})\Gamma(\frac{-a_4+a_3+1}{2})}\eta^{-\frac{a_4}{2}-\frac{1}{2}} \Bigg ].
\end{split}
\end{equation}
The achievable diversity order for the case of Gaussian noise can be evaluated using \eqref{eq:Div_Gamma} and \eqref{eq:Gauss1} to $D_\textup{GS}  \approx (a_5+1)/2$.
\begin{figure}[!t]
\centering
\includegraphics[width=0.8\linewidth]{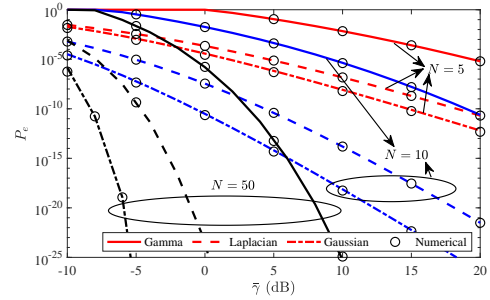}
\caption{Average SER versus $\bar{\gamma}$ in the presence of Gamma, Laplacian, and Gaussian noise, $N=5,10$ and 50.}
\label{fig:Pe_SNR}
\vspace*{-0.5cm}
\end{figure}
\underline{\textit{Remark:}} The results above suggest that the achievable diversity of the system is independent from the type of the noise and is only affected by the number of RIS elements, $N$. This will be also verified through the numerical results.
\section{Numerical Results} 
\label{sec:result}

\par In this section, numerical results are presented to validate the accuracy of the proposed theoretical framework, and to examine the performance of RIS-assisted communications subject to AWGGN over Rayleigh fading, when QPSK signaling is considered, i.e., $A=B=2$. The SER of other modulation schemes can be studied in a similar manner by selecting different values of $A$ and $B$ according to Table \ref{Table:AB}. Unless mentioned otherwise, $d_1$ and $d_2$ are assumed to be normalized to unity.
\par In Fig. \ref{fig:Pe_SNR}, we investigate the SER performance as a function of the average transmit SNR, $\bar{\gamma}$, when the received signals are corrupted by different types of noise, namely, Gamma, Laplacian, and Gaussian noise, and when $N=5,10$ and 50. Note that the generalized $Q$-function is not available in the Matlab library, and under the considered system parameters, other representations of the generalized $Q$-function (including the incomplete gamma function) cannot be accurately evaluated due to the precision limitations in Matlab. Hence, in this work, we selected the numerical approach, which is widely accepted in the relevant literature \cite{soury2012}, to prove the accuracy of our framework. In Fig. \ref{fig:Pe_SNR}, the markers denote numerical results while solid and dashed lines represent analytical results, which are obtained using the closed-form expressions given in \eqref{eq:Gamma}, \eqref{eq:Laplacian}, and \eqref{eq:Gauss}. The full agreement between the numerical and analytical results reflects the accuracy of our proposed mathematical model and its effectiveness in capturing the underlying SER of the system. It is observed that the performance of the system is improved by increasing the value of $\alpha$, i.e., as the noise goes from Gamma to Laplacian and to Gaussian, as discussed in Section \ref{model}. Nonetheless, it is shown that as $\alpha$ increases, its effect on the error rate performance decreases. This finding is in agreement with the results demonstrated in \cite{9097184, soury2013}. Furthermore, it is evident that increasing the number of RIS elements causes a significant reduction in the SER, due to the enhanced spatial diversity gain offered by RIS-assisted systems. This make them particularly attractive for large-scale applications. Also, by closely inspecting Fig. \ref{fig:Pe_SNR}, it is noticed that adding more RIS elements is more rewarding in the high SNR regime than the low SNR regime, where the reduction in the SER is more significant. It can be further observed from Fig. \ref{fig:Pe_SNR} that, although the effect of the noise type is more pronounced when $N$ is large, the overall SER behavior is the same for all $N$ values. We emphasize that the analytical expressions derived in this work lend themselves as effective tools in identifying the minimum number of RIS elements that should be deployed in order to achieve a targeted SER value under different noise types. As a consequence, unnecessary phase adjustments to account for the extra deployed RIS elements may be avoided.
\par Fig. \ref{fig:d_SNR} depicts the achievable diversity order versus the average transmit SNR when Laplacian and Gaussian noise are considered for the cases when $N=5,10$ and 20. It can be noted from the figure that the diversity order converges to $(a_5+1)/2$ regardless of the type of noise that corrupts the received signals. Interestingly, it is observed that the diversity order is more impacted by the variation of the value of $\alpha$ at low SNR compared to high SNR scenarios. It can be further noticed that for a high number of RIS elements, the diversity order of the system is highly affected by $\alpha$ compared to a low number of RE. It is worth highlighting that a similar conclusion can be obtained in a multi-user scenario, where it was demonstrated in \cite{9097184} that $\alpha$ has no impact on the users' diversity orders. This implies that in an RIS-enabled multi-user scenario, e.g., NOMA, the number of REs and the users' order determine the achieved diversity order.

\par Fig. \ref{fig:distance} demonstrate the impact of the RIS location on the error rate performance of the underlying system model, in the presence of different noise types, namely, Laplacian and Gaussian noise. Here, we set the total distance between $S$ and $D$ to 5~m, i.e., $d_1 + d_2 = 5$, and $\varrho=2.7$. From the figure, it can be observed that the best error rate performance is obtained when the RIS is closer to the transmitter or the receiver, while the worse performance is experienced when the RIS is located in the middle between $S$ and $D$. While it is shown that different noise types over different RIS locations yield similar behaviors, the effect of the noise type with respect to the RIS location is more pronounced for higher $N$ values.

\begin{figure}[!t]
\centering
\includegraphics[width=0.8\linewidth]{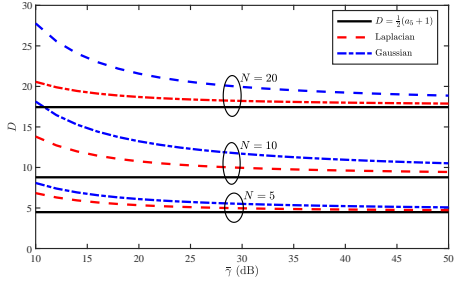}
\caption{Achievable diversity order versus $\bar{\gamma}$ for $N=5$, 10, and 20, in the presence of Laplacian and Gaussian noise.}
\vspace*{-0.5cm}
\label{fig:d_SNR}
\end{figure}

\begin{figure}[!t]
\centering
\includegraphics[width=0.8\linewidth]{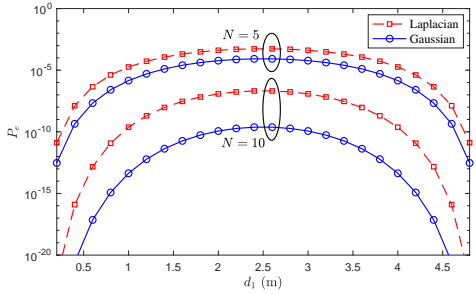}
\caption{Average SER versus $d_1$ for Laplacian and Gaussian noise scenarios, $\bar{\gamma} = 20$ dB, $N=5,10$.}
\vspace*{-0.5cm}
\label{fig:distance}
\end{figure}

\section{Conclusions}
\label{sec:conc}

In this letter, we examined the performance of RIS-assisted systems subject to GGN and considering Rayleigh fading. In particular, we derived a novel closed-form expression for the SER in the considered setup and studied the impact of the GGN parameters on the SER performance and achievable diversity order. Our results demonstrated that the noise shaping parameter has a negligible impact on the diversity order of the system and that the maximum achievable diversity order is identified by the number of RIS elements, $N$. Our proposed framework can be considered as the baseline for many future extensions and research directions, such as in multiple users/RIS/cells, multiple access-based systems, e.g., NOMA, performance under hardware impairment, and RIS-aided unmanned aerial vehicles networks, to name a few.

\vspace{-2pt}

\bibliographystyle{IEEEtran}
\bibliography{references}

\begin{thebibliography}{10}
\providecommand{\url}[1]{#1}
\csname url@samestyle\endcsname
\providecommand{\newblock}{\relax}
\providecommand{\bibinfo}[2]{#2}
\providecommand{\BIBentrySTDinterwordspacing}{\spaceskip=0pt\relax}
\providecommand{\BIBentryALTinterwordstretchfactor}{4}
\providecommand{\BIBentryALTinterwordspacing}{\spaceskip=\fontdimen2\font plus
\BIBentryALTinterwordstretchfactor\fontdimen3\font minus
  \fontdimen4\font\relax}
\providecommand{\BIBforeignlanguage}[2]{{%
\expandafter\ifx\csname l@#1\endcsname\relax
\typeout{** WARNING: IEEEtran.bst: No hyphenation pattern has been}%
\typeout{** loaded for the language `#1'. Using the pattern for}%
\typeout{** the default language instead.}%
\else
\language=\csname l@#1\endcsname
\fi
#2}}
\providecommand{\BIBdecl}{\relax}
\BIBdecl

\bibitem{DiRenzo2020}
M.~{Di Renzo} \emph{et~al.}, ``Smart radio environments empowered by
  reconfigurable intelligent surfaces: How it works, state of research, and the
  road ahead,'' \emph{IEEE J. Sel. Areas Commun}, vol.~38, no.~11, pp.
  2450--2525, Nov. 2020.

\bibitem{bariah2020}
L.~Bariah \emph{et~al.}, ``Large intelligent surface-assisted nonorthogonal
  multiple access for 6{G} networks: Performance analysis,'' \emph{IEEE
  Internet Things J.}, vol.~8, no.~7, pp. 5129--5140, Apr. 2021.

\bibitem{Boulogeorgos}
A.~A. {Boulogeorgos} and A.~{Alexiou}, ``Performance analysis of reconfigurable
  intelligent surface-assisted wireless systems and comparison with relaying,''
  \emph{IEEE Access}, vol.~8, pp. 94\,463--94\,483, May 2020.

\bibitem{Han}
Y.~{Han} \emph{et~al.}, ``Large intelligent surface-assisted wireless
  communication exploiting statistical {CSI},'' \emph{IEEE Trans. Veh.
  Technol}, vol.~68, no.~8, pp. 8238--8242, Aug. 2019.

\bibitem{Atapattu}
S.~{Atapattu} \emph{et~al.}, ``Reconfigurable intelligent surface assisted
  two–way communications: Performance analysis and optimization,'' \emph{IEEE
  Trans. Commun.}, vol.~68, no.~10, pp. 6552--6567, Oct. 2020.

\bibitem{Hemanth}
A.~{Hemanth} \emph{et~al.}, ``Outage performance analysis of reconfigurable
  intelligent surfaces-aided {NOMA} under presence of hardware impairment,''
  \emph{IEEE Access}, vol.~8, pp. 212\,156--212\,165, Nov. 2020.

\bibitem{qian2020beamforming}
X.~{Qian} \emph{et~al.}, ``Beamforming through reconfigurable intelligent
  surfaces in single-user {MIMO} systems: {SNR} distribution and scaling laws
  in the presence of channel fading and phase noise,'' \emph{IEEE Wireless
  Commun. Lett.}, vol.~10, no.~1, pp. 77--81, Jan. 2021.

\bibitem{Mohjazi2018}
L.~{Mohjazi} \emph{et~al.}, ``Performance analysis of {SWIPT} relaying systems
  in the presence of impulsive noise,'' \emph{IEEE Access}, vol.~6, pp.
  71\,662--71\,677, Nov. 2018.

\bibitem{soury2012}
H.~Soury, F.~Yilmaz, and M.-S. Alouini, ``Average bit error probability of
  binary coherent signaling over generalized fading channels subject to
  additive generalized gaussian noise,'' \emph{IEEE Commun. Lett.}, vol.~16,
  no.~6, pp. 785--788, Jun. 2012.

\bibitem{9097184}
L.~{Bariah} \emph{et~al.}, ``Non-orthogonal multiple access in the presence of
  additive generalized {G}aussian noise,'' \emph{IEEE Commun. Lett.}, vol.~24,
  no.~10, pp. 2137--2141, Oct. 2020.

\bibitem{IntTable}
I.~Gradshteyn and I.~Ryzhik, \emph{Table of Integrals, Series, and Products},
  A.~Jeffrey and D.~Zwillinger, Eds.\hskip 1em plus 0.5em minus 0.4em\relax
  Academic Press, 2007.

\bibitem{mohjazi2020}
L.~Mohjazi, S.~Muhaidat, Q.~H. Abbasi, M.~A. Imran, O.~A. Dobre, and
  M.~Di~Renzo, ``Battery recharging time models for reconfigurable intelligent
  surfaces-assisted wireless power transfer systems,'' \emph{IEEE Trans. Green
  Commun. Netw.}, pp. 1--1, 2021.

\bibitem{Lyu2020}
J.~Lyu and R.~Zhang, ``Spatial throughput characterization for intelligent
  reflecting surface aided multiuser system,'' \emph{IEEE Wireless Commun.
  Lett.}, vol.~9, no.~6, pp. 834--838, June 2020.

\bibitem{Lyu2021}
------, ``Hybrid active/passive wireless network aided by intelligent
  reflecting surface: System modeling and performance analysis,'' \emph{IEEE
  Transactions on Wireless Communications}, pp. 1--1, May 2021.

\bibitem{soury2013}
H.~Soury, F.~Yilmaz, and M.-S. Alouini, ``Exact symbol error probability of
  square $m$-{QAM} signaling over generalized fading channels subject to
  additive generalized gaussian noise,'' in \emph{IEEE Int. Symp. on Inf.
  Theory}.\hskip 1em plus 0.5em minus 0.4em\relax IEEE, Jul. 2013, pp. 51--55.

\bibitem{int3}
A.~P. Prudnikov, Y.~A. Brychkov, and O.~I. Marichev, \emph{Integrals and
  Series, Volume 3: More Special Functions}.\hskip 1em plus 0.5em minus
  0.4em\relax New York: Gordon and Breach, 1986.

\bibitem{salahat2014}
E.~Salahat and H.~Saleh, ``Novel average bit error rate analysis of generalized
  fading channels subject to additive white generalized gaussian noise,'' in
  \emph{IEEE Global Conf. on Signal and Inf. Process. (GlobalSIP)}.\hskip 1em
  plus 0.5em minus 0.4em\relax IEEE, Dec. 2014, pp. 1107--1111.

\bibitem{ansari2015}
I.~S. Ansari, F.~Yilmaz, and M.-S. Alouini, ``Performance analysis of
  free-space optical links over m{\'a}laga ($\mathcal{M}$) turbulence channels
  with pointing errors,'' \emph{IEEE Trans. Wireless Commun.}, vol.~15, no.~1,
  pp. 91--102, Jan. 2016.

\end{thebibliography}
\balance

\end{document}